\newtheorem{theorem}{Theorem}
\newtheorem{remark}[theorem]{Remark}
 \title{On recovering intragranular strain fields from grain-averaged strains obtained by high-energy X-ray diffraction microscopy}
\author[1]{C.\,K.\ Cocke}
\author[1,2]{A.\ Akerson}
\author[1]{S.\,F.\ Gorske}
\author[1]{K.\,T.\ Faber}
\author[1]{K.\ Bhattacharya\thanks{Corresponding author: Kaushik Bhattacharya, bhatta@caltech.edu}}
\affil[1]{Division of Engineering and Applied Science, California Institute of Technology, Pasadena, CA 91125, USA}
\affil[2]{Reality Labs Research, Meta Platforms, Inc., Redmond, WA 98052, USA}
\date{}
\begin{document}

\maketitle

  \begin{abstract}
 We address an unusual problem in the theory of elasticity motivated by the problem of reconstructing the strain field from partial information obtained using X-ray diffraction.  Referred to as either high-energy X-ray diffraction microscopy~(HEDM) or three-dimensional X-ray diffraction microscopy~(3DXRD), these methods provide diffraction images that, once processed, commonly yield detailed grain structure of polycrystalline materials, as well as grain-averaged elastic strains.  However, it is desirable to have the entire (point-wise) strain field.  So we address the question of recovering the entire strain field from grain-averaged values in an elastic polycrystalline material.  The key idea is that grain-averaged strains must be the result of a solution to the equations of elasticity and the overall imposed loads.  In this light, the recovery problem becomes the following: find the boundary traction distribution that induces the measured grain-averaged strains under the equations of elasticity.  We show that there are either zero or infinite solutions to this problem, and more specifically, that there exist an infinite number of kernel fields, or non-trivial solutions to the equations of elasticity that have zero overall boundary loads and zero grain-averaged strains.  We define a best-approximate reconstruction to address this non-uniqueness.  We then show that, consistent with Saint-Venant's principle, in experimentally relevant cylindrical specimens, the uncertainty due to non-uniqueness in recovered strain fields decays exponentially with distance from the ends of the interrogated volume.  Thus, one can obtain useful information despite the non-uniqueness.  We apply these results to a numerical example and experimental observations on a transparent aluminum oxynitride~(AlON) sample.
  \end{abstract}

 \newpage

\section{Introduction}

Over the last 30 years, X-ray diffraction--based methods have become a powerful tool for the 3D characterization of crystalline materials.  Unlike destructive techniques such as serial sectioning, X-ray diffraction-based methods are nondestructive and allow for in situ characterization of materials during mechanical testing. Of particular interest here is high-energy X-ray diffraction (HEDM)~\cite{bernierHighEnergyXRayDiffraction2020}, which is generally performed at synchrotron light sources that offer high spectral brightness. 

Typically, two types of HEDM experiments are performed: far-field~(ff-HEDM) and near-field~(nf-HEDM), and the precise experimental and data processing details can be found elsewhere~\cite{suterForwardModelingMethod2006,bernierFarfieldHighenergyDiffraction2011,lienertHighenergyDiffractionMicroscopy2011,sharmaFastMethodologyDetermine2012,sharmaFastMethodologyDetermine2012a,bernierHighEnergyXRayDiffraction2020}.  In both methods, the sample is illuminated by a monochromatic X-ray beam and rotated. As the sample rotates, lattice planes within constituent crystallites diffract the incident beam as the Bragg condition is satisfied. Plane detectors placed downstream of the sample then measure the summed intensity of all diffraction events during a rotational increment. The difference between ff-HEDM and nf-HEDM is the detector distance, which controls how crystallite position manifests itself in diffraction measurements.  
Performing reconstruction from both measurement techniques and combining the data yields a complete three-dimensional grain orientation map and the grain-averaged elastic strain.  Although these measurements are insightful, it would be very useful to recover the actual (point-wise) elastic strain field.

This has led to the development of a point-focused HEDM~(pf-HEDM) (also known as scanning 3DXRD) technique that uses a 1D pencil beam to probe small volumes to obtain voxelized intragranular elastic strains~\cite{hayashiIntragranularThreedimensionalStress2019,hayashiScanningThreeDimensionalXRay2017,henningssonReconstructingIntragranularStrain2020,liResolvingIntragranularStress2023,henningssonIntragranularStrainEstimation2021}. 
Improved algorithms have also been developed to locally recover voxelized intragranular strains from near-field measurements~\cite{shenVoxelbasedStrainTensors2020,reischigThreedimensionalReconstructionIntragranular2020}. While these methods provide intragranular strains, they still average over the volume of the pencil beam, so the problem of obtaining the actual (point-wise) elastic strain field remains.

There have been attempts to address this as a post-processing step by incorporating additional physics.  Since the elastic moduli of the constituent materials are known, the voxel-averaged elastic stresses can be obtained from the corresponding strains.  These do not satisfy equilibrium, and so Zhou {\it et al.}~\cite{zhouImposingEquilibriumExperimental2022} sought to reconstruct the actual stress field by projecting the grain-averaged stresses to the nearest stress field satisfying equilibrium.  This work did not address strain compatibility, and therefore it is unclear whether the reconstructed stress field is physically meaningful.   A more complete reconstruction by Naragani {\it et al.}~\cite{naraganiInterpretationIntragranularStrain2021} in the context of plasticity used an iterative approach aiming to satisfy equilibrium, compatibility, and an elastic--plastic constitutive relation. However, this required a guess for the boundary condition and left the effects of nonuniqueness unexplored.  A difficulty in both works is that X-ray techniques can measure only the elastic strain, so the total strain is unknown.

\begin{figure}[t]
  \centering
  \includegraphics[width=6in]{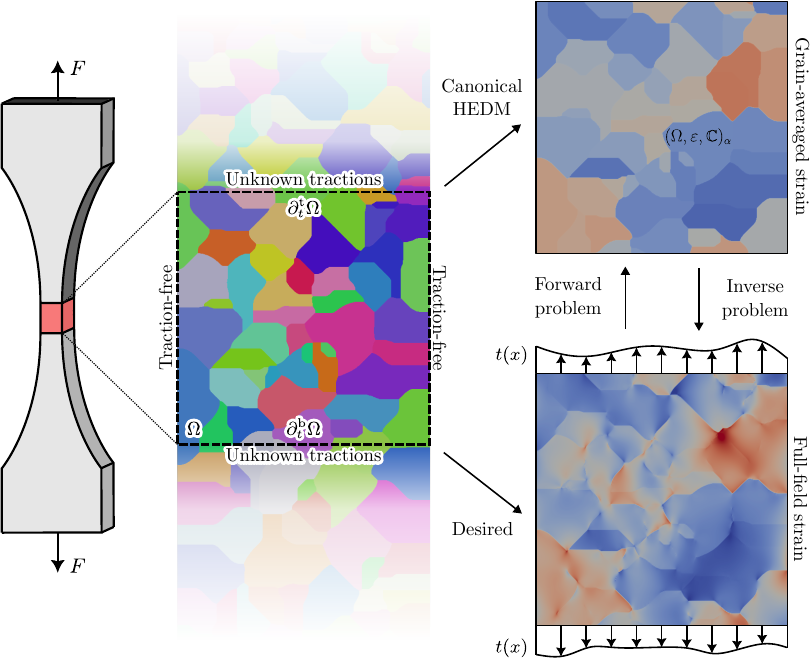}
  \caption{High-energy X-ray diffraction provides the grain structure (middle) and the grain-averaged strains (upper right) of a small cylindrical region of the specimen (left). We seek the full strain field, or equivalently the traction distribution on the top and bottom of the gauge section which induces a strain field satisfying the equations of elasticity (lower right).}
  \label{fig:problem_diagram}
\end{figure}

In this paper, we examine the mathematical status of the inverse problem of obtaining the actual (pointwise) strain field from the grain- or voxel-averaged values and the macroscopic applied load.  We limit ourselves to the simpler problem of elasticity, since one can measure the actual grain-wise elastic strains. Additionally, we study grain-averaged strains, although our methods may also be applied to voxel-averaged strains. The problem is shown schematically in Figure~\ref{fig:problem_diagram}. Note that the X-ray measurements are confined to a small region of the gauge section highlighted in the figure.

Specifically, we assume that we have a cylindrical domain with a square cross section that is loaded at the ends with an unknown traction distribution but with a known macroscopic force.  We also assume that we know the elastic modulus at each point and the strain averaged over a series of subdomains (grains).  We ask whether, given the grain-averaged strains and macroscopic force, we can find a strain field consistent with the equations of elasticity. To this end, the problem is cast as a problem of finding the boundary traction distribution that, through the equations of elasticity, induces the measured grain-averaged strains.

We have two main results.  First, we show that this problem either has no solution or infinite solutions (see Theorem~\ref{th:0inf}).  Trivially, one needs a consistency condition between the applied load and grain-averaged strains to show the existence of solutions.  Less trivially, we show the existence of an infinite number of kernel fields, or traction distributions with zero macroscopic force constructed such that their associated strain fields have zero grain-averaged strains. This follows from a counting argument.  The equations of elasticity give rise to a linear operator that maps (equilibrated) traction distributions to grain-averaged strains, and we show that this operator has an infinite-dimensional kernel.  We provide numerical examples of these kernel fields.  We also define a best-approximate reconstruction and apply it to experimental results on aluminum oxynitride~(AlON).

Our second result characterizes the non-uniqueness in the solution, i.e., the kernel fields. We observe from numerical results that the kernel tractions are highly oscillatory with the associated kernel strains decaying away from the ends of the cylinder.  This is a manifestation of Saint-Venant's principle~\cite{toupinSaintVenantsPrinciple1965}, though in a heterogeneous medium, and we seek to characterize these kernel strains in a central segment of the cylinder. We show that the largest possible kernel strain field (in the $L^2$-norm) decays exponentially as the distance between the central segment and the cylinder ends increases. It follows that all solutions of the strain reconstruction will agree far away from the ends, and that the strain reconstruction from grain-wise averages can be meaningful in the central segment even though the reconstruction is non-unique.

We formulate the problem and present the main mathematical result concerning the reconstruction of full-field strains given partial information and the equations of elasticity in Section~\ref{sec:problem_formulation}.  In Section~\ref{sec:examples}, we then introduce a numerical method to study this problem and illustrate the behavior of solutions before providing a series of computational examples. Finally, in Section~\ref{sec:saint_venants_principle} we address how the behavior of non-uniqueness follows Saint-Venant's principle before providing summary remarks in Section~\ref{sec:conclusions}.

\section{Full-field strains through mechanical admissibility} \label{sec:problem_formulation}

Let $\Omega \subset {\mathbb R}^d$ be the domain of interest (gauge section marked on the left of Figure~\ref{fig:problem_diagram}) consisting of $N_\mathrm{g}$~disjoint grains $\{\Omega_g\}_{g=1}^{N_\mathrm{g}} $, and let $\partial_t \Omega = \partial^\mathrm{t}_t \Omega \cup \partial^\mathrm{b}_t \Omega$ be two disjoint parts of the boundary to which traction is applied (see Figure~\ref{fig:problem_diagram}).  We assume that the rest of the boundary is traction-free and the medium is linear elastic with a known piecewise constant and positive-definite elastic modulus field~${\mathbb C}(x)$.   

The HEDM experiment gives us the total applied axial force~$F \in {\mathbb R}$ on~$\partial_t \Omega$ as well as a set of grain-averaged strains $E = \{E_g \} \in ({\mathbb R}^{N_\mathrm{s}})^{N_\mathrm{g}}$, where $N_\mathrm{s} = d(d+1)/2$ is the number of strain components.  We seek to find a displacement field~$u$ that satisfies the equations of elasticity and applied boundary conditions,
\begin{subequations} \label{eq:elas}
\begin{align}
\nabla \cdot {\mathbb C} \nabla u = 0 \quad &\text{in } \Omega,\label{eq:elas_domain} \\
({\mathbb C} \nabla u)n = t \quad &\text{on } \partial_t \Omega, \label{eq:elas_load}\\
({\mathbb C} \nabla u)n = 0 \quad &\text{on } \partial \Omega \setminus \partial_t \Omega, \label{eq:elas_zero_load} \\
\int_{\Omega} u \ \mathrm{d}V = \int_{\Omega} x \times u \ \mathrm{d}V = 0, \quad & \label{eq:elas_constraint}
\end{align}
\end{subequations}
for some equilibrated traction distribution $t \in \mathcal{T}_F$, defined as
\begin{equation}\label{eq:admissible_tractions}
  \mathcal T_F = \bigg\{t \in L^2(\partial_t \Omega) \mathrel{\bigg|} \int_{\partial^\mathrm{t}_t\Omega} e \cdot t \, \mathrm{d}S = - \int_{\partial^\mathrm{b}_t\Omega} e \cdot t \, \mathrm{d}S = F, \, \int_{\partial_t\Omega} t \, \mathrm{d}S = \int_{\partial_t\Omega} x \times t \, \mathrm{d}S = 0 \bigg\},
\end{equation}
consistent with the total applied force~$F$ in direction~$e$ as well as the observed grain-averaged strains:
\begin{equation} \label{eq:ave}
E_g = \frac{1}{2}\int_{\Omega_g} (\nabla u + \nabla u^\top) \, \mathrm{d}V, \quad g=1, \dots, N_\mathrm{g}.
\end{equation}
For future use, we also define the inner-product space
\begin{equation}\label{eq:t}
  \mathcal T = \bigg\{t \in L^2(\partial_t \Omega) \mathrel{\bigg|} \int_{\partial_t\Omega} t \, \mathrm{d}S = \int_{\partial_t\Omega} x \times t \, \mathrm{d}S = 0 \bigg\},
\end{equation}
which is the space of all tractions with zero net force and moment.

We are interested in three questions: 1) does the problem have a solution, that is, does there exist a traction distribution that induces the observed grain-averaged strains; 2) if so, is the solution unique; and 3) if the solution is non-unique, can the non-uniqueness be quantified?  We begin by answering the first two here and address the third in Section~\ref{sec:saint_venants_principle}.

\begin{theorem} \label{th:0inf}
There are either zero or an infinite number of displacement fields consistent with the applied force~$F$, grain-averaged strains~$E$, and the equations of elasticity (\ref{eq:elas}).
\end{theorem}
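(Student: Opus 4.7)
The plan is to exploit linearity of the elasticity system so that the set of admissible solutions is either empty or an affine space modeled on the kernel of a suitable linear operator, and then to show that this kernel is infinite-dimensional by a rank--nullity count.

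First, I would introduce the linear map
\[
L : \mathcal{T} \to (\mathbb{R}^{N_\mathrm{s}})^{N_\mathrm{g}}, \qquad L(t) = \bigl\{E_g(u_t)\bigr\}_{g=1}^{N_\mathrm{g}},
\]
where $u_t$ is the solution of \eqref{eq:elas_domain}--\eqref{eq:elas_constraint} driven by a traction $t \in \mathcal{T}$, and $E_g(u_t)$ is the grain average defined in \eqref{eq:ave}. Well-posedness of $u_t$ is standard: the zero-net-force and zero-net-moment conditions built into $\mathcal{T}$ supply the compatibility required for the pure Neumann problem, and the kinematic constraint \eqref{eq:elas_constraint} removes the rigid-motion indeterminacy, yielding a unique representative. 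Linearity of $L$ follows from the linearity of \eqref{eq:elas_domain}--\eqref{eq:elas_load} and of spatial averaging.

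Next, I would fix any reference traction $\bar t \in \mathcal{T}_F$ (for instance, uniform axial traction on the two loaded faces carrying the prescribed force $F$). Because $\mathcal{T}_F = \bar t + \mathcal{T}$, a candidate traction $\bar t + s$ with $s \in \mathcal{T}$ reproduces the measured averages precisely when $L(s) = E - L(\bar t)$. Thus the set of admissible solutions is either empty (if $E - L(\bar t) \notin \image L$) or an affine translate of $\ker L$. The theorem then reduces to the statement $\dim \ker L = \infty$, which is an elementary dimension count: the codomain has finite dimension $N_\mathrm{s}N_\mathrm{g}$, whereas $\mathcal{T}$ is cut out of the infinite-dimensional space $L^2(\partial_t\Omega)$ by only the $d(d+1)/2$ scalar linear conditions in \eqref{eq:t}, and is therefore itself infinite-dimensional. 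Any linear map from an infinite-dimensional vector space into a finite-dimensional one has an infinite-dimensional kernel.

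The only point that needs real care is the well-definedness and linearity of $L$ on all of $\mathcal{T}$, which rests on standard existence theory for the traction problem of linear elasticity with a heterogeneous piecewise-constant modulus $\mathbb{C}(x)$; once that is granted, the zero/infinite dichotomy drops out immediately. The genuinely substantive question — whether these infinitely many kernel displacements can actually be told apart in the interior of $\Omega$, or whether they are confined to boundary layers — is the Saint-Venant estimate deferred to Section~\ref{sec:saint_venants_principle}.
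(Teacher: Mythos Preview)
Your proof is essentially the paper's own argument: build the linear traction-to-grain-average map via the elasticity solution operator, then invoke rank--nullity (infinite-dimensional domain, finite-dimensional target) to obtain an infinite-dimensional kernel. One slip to fix: $\mathcal{T}_F = \bar t + \mathcal{T}_0$, not $\bar t + \mathcal{T}$, since $\mathcal{T}_F$ carries the extra scalar constraint that the axial resultant on the top face equal $F$; the paper absorbs this by augmenting $\mathcal{L}$ with the constraint map $\mathcal{C}$ into $\tilde{\mathcal{L}}:L^2\to\mathbb{R}^{N_\mathrm{s}N_\mathrm{g}}\oplus\mathbb{R}^{N_\mathrm{s}+1}$, but your dimension count survives unchanged because $\mathcal{T}_0$ still has finite codimension in $L^2$. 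The paper also supplies a concrete witness for the empty case---hydrostatic grain-average data in a homogeneous isotropic body with traction-free lateral faces---which you allude to only abstractly.
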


\begin{proof}
We begin by noting that given any $t \in \mathcal{T}$, there exists a unique displacement field~$u \in H^1(\Omega)$ that satisfies the equation of elasticity (\ref{eq:elas})~\cite{gurtinLinearTheoryElasticity1973}, and that the solution~$u$ is linear in~$t$.  So, we define the linear solution operator
\begin{equation}\label{eq:solution_operator}
\mathcal{S}\colon t \in \mathcal{T} \mapsto u \in H^1(\Omega) \text{ that satisfies (\ref{eq:elas})}.
\end{equation}
Conversely, given any $u$ that satisfies Equations~\eqref{eq:elas_domain}, \eqref{eq:elas_zero_load}, and \eqref{eq:elas_constraint}, there exists $t \in \mathcal{T}$ defined by Equation~\eqref{eq:elas_load}.  We also define the linear grain-averaging operator 
\begin{equation}
\mathcal{G}\colon u \in H^1 (\Omega) \mapsto E \in ({\mathbb R}^{N_\mathrm{s}})^{N_\mathrm{g}} \text{ that satisfies (\ref{eq:ave})}.
\end{equation}
Thus, given any traction $t\in \mathcal{T}$ we can find the grain-averaged strains as 
\begin{equation} \label{eq:comp}
E = (\mathcal{G} \circ \mathcal{S}) t = \mathcal{L} t.
\end{equation}

Putting these together, the problem of finding the strain field from grain-averaged values may be re-formulated as the following inverse problem: given $F$~and~$E$, find $t\in \mathcal{T}_F$ that satisfies Equation~\eqref{eq:comp}.  Since Equation~\eqref{eq:comp} is linear, its solution(s) depends on the properties of the operator $\mathcal{L}$. As $\mathcal{L}$~is a map from an infinite-dimensional space to a finite-dimensional space, $\ker (\mathcal{L})$ has infinite dimension. 

We define a {\it kernel field} as a traction and corresponding strain field such that there is zero net load, moment, and grain-averaged strains, i.e., $t \in \ker (\mathcal{L}) \cap \mathcal{T}_0$, where $\mathcal T_0$ is defined as $\mathcal T_F$ with $F=0$.  Note that $\mathcal T_0$~can be written as $\mathcal T_0 = \ker \mathcal C$ where $\mathcal C\colon L^2 \to \mathbb{R}^{N_\mathrm{s} + 1}$ is a linear operator that computes the net forces and moments of a traction distribution. We can then define $\tilde{\mathcal{L}} \colon L^2 \to \mathbb{R}^{N_\mathrm{s} N_\mathrm{g}} \oplus \mathbb{R}^{N_\mathrm{s} + 1}$ as $\tilde{\mathcal{L}} t := (\mathcal Lt, \mathcal Ct)$ such that $\ker \tilde{\mathcal{L}} = \ker (\mathcal{L}) \cap \mathcal{T}_0$. Following the same dimensionality argument, $\tilde{\mathcal{L}}$ also has an infinite-dimensional kernel; hence, $\ker (\mathcal{L}) \cap \mathcal{T}_0$ has infinite dimension. This argument shows that there are an infinite number of kernel fields; therefore, if one solution to Equation~\eqref{eq:comp} exists, there are an infinite number of solutions.

It is possible that the problem has no solution, as demonstrated by the following counterexample. Consider a homogeneous isotropic material ($N_\mathrm{g}=1$) with Young's modulus~$Y > 0$, Poisson's ratio~$\nu>0$, and observed hydrostatic strain state: $E = \varepsilon I$.  There is no elasticity solution that satisfies a zero traction condition on the lateral surfaces.
\end{proof}

\begin{remark}
We can reformulate the problem to one of finding an approximate reconstruction, 
\begin{equation}
\min_{t \in {\mathcal T}_F} \ \| \mathcal{L} t - E \|_2^2.
\end{equation}
This also has an infinite number of solutions due to kernel fields.  So, one can try to find the best-approximate reconstruction by penalizing the $L^2$-norm of the tractions:
\begin{equation}\label{eq:regularized_continuous_lsq}
\min_{t \in {\mathcal T}_F} \ \| \mathcal{L} t - E \|_2^2 + \lambda^2 \| t \|_2^2,
\end{equation}
for some penalty~$\lambda > 0$.  The solution to Equation~\eqref{eq:regularized_continuous_lsq} is unique.  However, the unique solution depends on~$\lambda$, and one can create other criteria to resolve the non-uniqueness.  
\end{remark}

\section{Examples} \label{sec:examples}

\subsection{Numerical implementation}\label{sec:discrete_matrices}

Following the problem statement defined in Section~\ref{sec:problem_formulation}, we now develop a numerical approach to explicitly solve the problem with the finite-element method. We discretize the domain using a finite-element discretization with $N_\mathrm{n}$~total nodes, $N_\mathrm{f}$~nodes where nodal forces are prescribed, and $N_\mathrm{e}$~elements.  In this discretization, the linear operator~$\mathcal{L}$ mapping loads to grain-averaged strains is an $N_\mathrm{s} N_\mathrm{g}  \times d N_\mathrm{f}$ matrix~$L$, defined as
\begin{equation}\label{eq:discrete_L}
  L := G B K^{-1} T,
\end{equation}
where $G$~computes the grain-averaged strains from the element strains, $B$~is a globally assembled version of the local symmetric gradient matrix defined from nodal shape functions, $K$~is the standard globally assembled finite-element stiffness matrix, and $T$~is a mapping matrix taking boundary loads~$f$ to the global nodal force vector.  These matrices are explicitly defined in Appendix~\ref{appendix:discrete_matrix_definitions}.

We would like to limit ourselves to equilibrated loads or the discretization of ${\mathcal T}$.  So we define an $N_\mathrm{s} + 1 \times d N_\mathrm{f}$ constraint matrix~$C$ such that $Cf=c_F$ prescribes that $f$~is an equilibrated surface nodal force vector with net force~$F$.  In other words $\{ f : Cf=c_F \}$ is the discretization of~$\mathcal{T}_F$. The constraint matrix~$C$ is also defined explicitly in Appendix \ref{appendix:discrete_matrix_definitions}.

The discretized problem of recovering the full-field strains becomes the problem of finding $f$ such that $E = Lf$ and $Cf=c_F$. As in Theorem~\ref{th:0inf}, if the discretization is sufficiently fine, there are either zero or a large number of solutions.  So, we study the problem of finding the best-approximate reconstruction,
\begin{equation}\label{eq:damped_projection_constrained_least_squares}
  \min_{ \{f : \ Cf=c_F\} } \|Lf - E\|^2_2 + \frac{\lambda^2}{F^2}\|f\|^2_2,
\end{equation}
which is the discretized form of Equation~\eqref{eq:regularized_continuous_lsq}. It is convenient to reformulate Equation~\eqref{eq:damped_projection_constrained_least_squares}. Choosing $P$ as the projection matrix onto $\ker C$ and setting $f_F$ to be the uniform traction field corresponding to the total prescribed load $F$, we may then rewrite Equation~\eqref{eq:damped_projection_constrained_least_squares} as an unconstrained regularized linear least-squares problem:
\begin{equation}\label{eq:bas}
  \min_{f} \|L(Pf+f_F) - E\|^2_2 + \frac{\lambda^2}{F^2}\|f\|^2_2.
\end{equation}
Equation~\eqref{eq:bas} may be solved using any desired least-squares solver to provide a unique\footnote{The unique solution is dependent on the chosen value of the regularization parameter~$\lambda$.} solution, i.e., the best-approximate full-field reconstruction.

We solve this problem using the LSMR method~\cite{fongLSMRIterativeAlgorithm2011}.  This is an iterative method that does not require an explicit matrix representation of~$L$.  Instead, the method simply requires the action of $L$~and~$L^\top$ (once each per iteration). For highly refined finite-element meshes, this allows for the use of iterative system solvers instead of explicit multiplication by the inverse stiffness matrix~$K^{-1}$; this evaluation is the most computationally expensive part of computing the action of $L$~and~$L^\top$.

\subsection{Nonuniqueness and kernel traction distribution}\label{sec:kernel_example}
\begin{figure}
  \centering
  \includegraphics[width=6.5in]{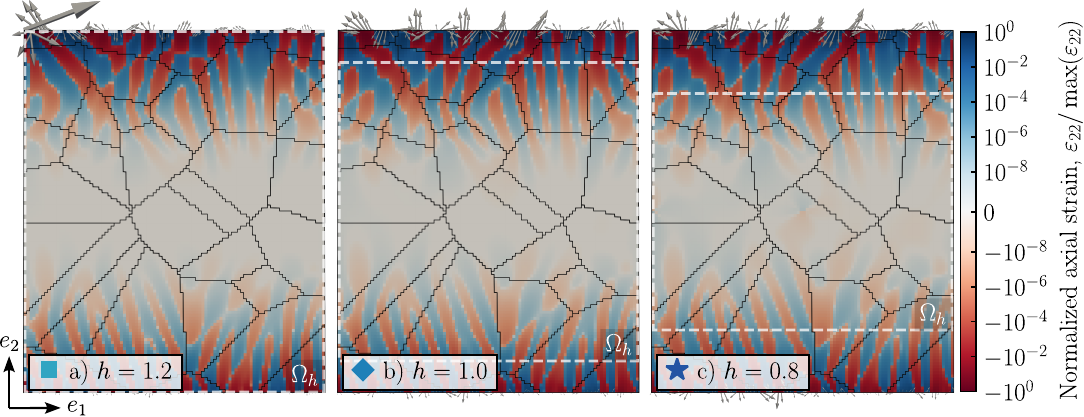}
  \caption{Three examples of kernel tractions (zero applied force and zero grain-averaged strains) and their corresponding strain fields in a unitless 2D domain with size $1 \times 1.2$. The different examples correspond to kernel tractions inducing maximal uncertainty in the subdomain~$\Omega_h$ of height~$h$ (denoted by the  dashed white boxes) for (a)~$h=1.2$ (whole domain), (b)~$h=1.0$, and (c)~$h=0.8$, see Section~\ref{sec:saint_venants_principle}. The markers correspond to those shown in Figure~\ref{fig:max_eigenvalues}.}
  \label{fig:null_traction}
\end{figure}

We begin by providing examples of kernel fields.  Recall that a kernel traction distribution~$t$ is an equilibrated traction corresponding to zero applied force and zero grain-averaged strains ($t \in {\mathcal T}_0 \cap \ker {\mathcal L}$ in the continuous setting or $t \in \ker C \cap \ker L$ in the discrete setting).  The existence of a kernel traction distribution is equivalent to non-uniqueness since we can add any multiple of a kernel traction distribution to a solution to obtain another solution.

We provide three examples of kernel loads and their corresponding strain fields in two dimensions~($d=2$).   We consider a unitless example with a rectangular domain of size $1 \times 1.2$ consisting of 33 grains with shapes obtained by Voronoi tessellation and orientations assigned with a random angle~$\theta \in [0, 2\pi)$.  We use the following example parameters: Young's modulus~$Y=1$, Poisson's ratio~$\nu=0.3$, and cubic anisotropy parameter~$\alpha=1$. We discretize the domain with a finite-element grid consisting of $100 \times 120$ square quadrilateral elements.  We explicitly assemble the matrices $L$ and $C$, and compute their joint kernel using singular value decomposition (SVD).  
Figures~\ref{fig:null_traction}(a--c) show three examples of kernel tractions with the axial normal strain~($\varepsilon_{22}$) field displayed. The grain-averaged strains of the strain fields shown in Figures~\ref{fig:null_traction}(a--c) are identically zero by construction; the same is true for the other two strain components.

In each of these examples, we see that the tractions fluctuate significantly and the strain fields decay exponentially away from the top and bottom boundaries.  We discuss this further in Section~\ref{sec:saint_venants_principle}.

\subsection{Best-approximate reconstruction}

\begin{figure}[!htbp]
  \centering
  \includegraphics[width=5.75in]{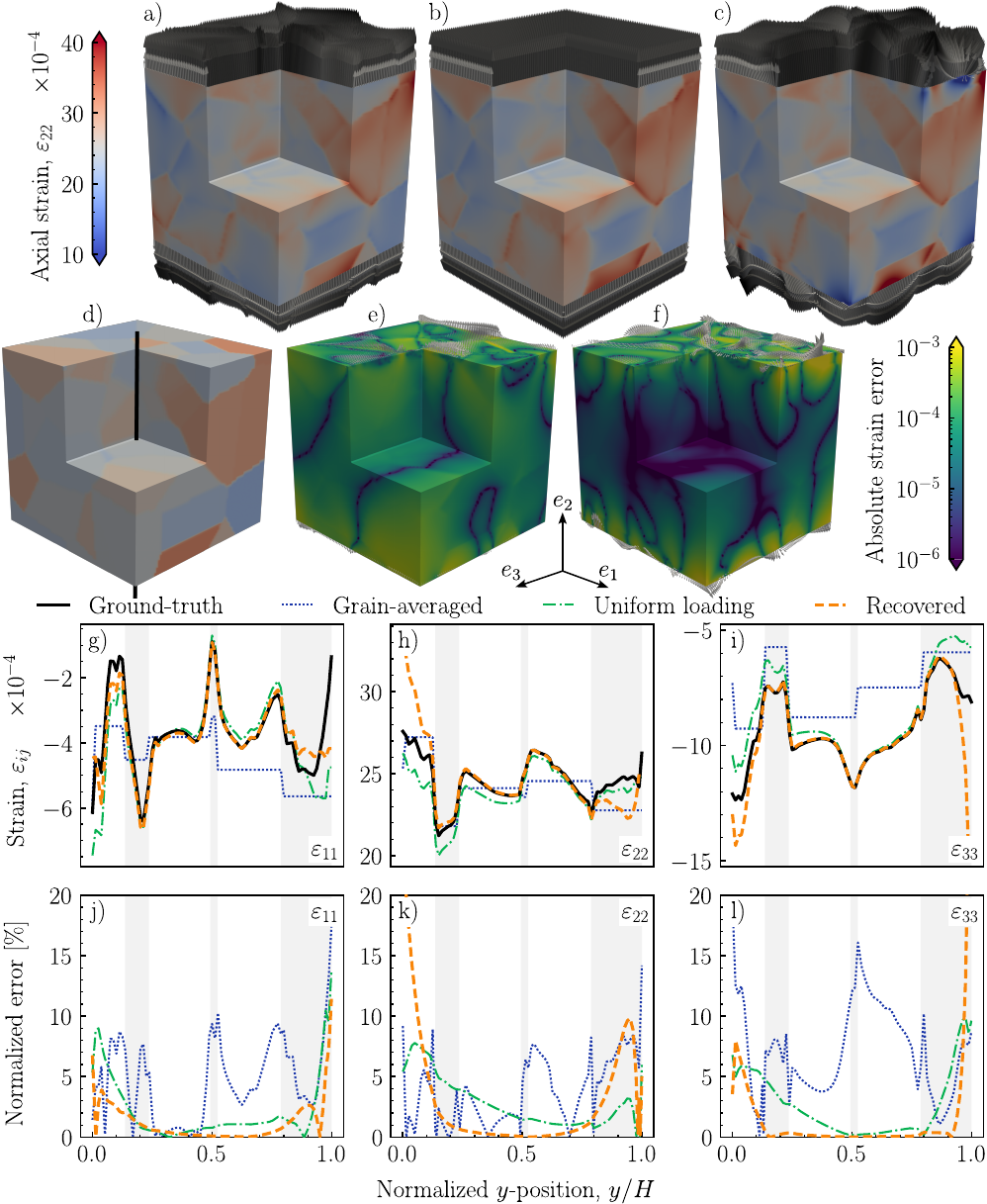}\\
  \caption{Best-approximate reconstruction using synthetic data. (a)~``Ground truth'', (b)~uniform loading initial guess, and (c)~best-approximate reconstruction loads and axial strain distributions. (d)~Grain-averaged strains used to solve the inverse problem. (e)~Uniform loading and (f)~best-approximate reconstruction error/difference between ground truth (viz., difference between (a)~and~(b); (a)~and~(c)). (g)~$\varepsilon_{11}$, (h)~$\varepsilon_{22}$, and (i)~$\varepsilon_{33}$ strain components plotted along the vertical black line shown in~(d) ($x_2 \in (0, H)$, $x_1=x_3=H/2$) for the four fields shown in~(a--d). (j--l)~Normalized error (normalized by the mean axial strain)  between recovered and ground-truth strain fields for the normal strain components along the same line. In~(g--l), changes in grains are highlighted by the alternating lightly shaded regions.}
  \label{fig:3d_strain_field_comparison}
\end{figure}

We study the best-approximate reconstruction using synthetic data. We consider a synthetic polycrystal consisting of 77 grains in the domain of interest with orientation and strain fields obtained following a procedure detailed in Appendix~\ref{appendix:ground_truth_dataset}. The resulting strain field is shown in Figure~\ref{fig:3d_strain_field_comparison}(a).  We then average the strain over the grains to obtain our ``data'', shown in Figure~\ref{fig:3d_strain_field_comparison}(d).  

We solve Equation~\eqref{eq:bas} with $\lambda=0.02$ and $f_F$ as the initial guess (uniform loading) to obtain the best-approximate loading. The loads and the corresponding strain fields are shown in Figures~\ref{fig:3d_strain_field_comparison}(b) and~\ref{fig:3d_strain_field_comparison}(c).  The difference between these and the ``ground-truth'' used to generate the data are shown in Figures~\ref{fig:3d_strain_field_comparison}(e) and~\ref{fig:3d_strain_field_comparison}(f).  In the whole domain, we observe similar levels of relative error\footnote{We define relative error between the ground-truth strain field~$\varepsilon_\text{ground}(x)$ and some other strain field~$\varepsilon(x)$ as $\|\varepsilon_\text{ground}(x) - \varepsilon(x)\|_2 / \|\varepsilon_\text{ground}(x)\|_2 \cdot 100 \%$.} between the best-approximate and uniform loading cases (7.8\%~vs.~7.1\%, respectively); however, in the center half of the domain (i.e., $x_2 \in (H/4, 3H/4)$), the error in the best-approximate loading case is much lower than the uniform case (0.7\%~vs.~3.6\%, respectively).

Figures~\ref{fig:3d_strain_field_comparison}(g--i) compare the strain components (both grain-averaged and full-field) along the vertical center line ($\{H/2, (0,H), H/2\}$), while Figures~\ref{fig:3d_strain_field_comparison}(j--l) show the errors.  Again, we find good agreement between the ground truth and the best-approximate reconstruction except near the top and bottom boundaries; this result is further expanded in Section~\ref{sec:saint_venants_principle}. We also note from Figures~\ref{fig:3d_strain_field_comparison}(j--l) that the best-approximate reconstruction has lower error in the center of the domain than the uniform loading case.

\subsection{Application to experimental data}
\begin{figure}
  \centering
  \includegraphics[width=6.5in]{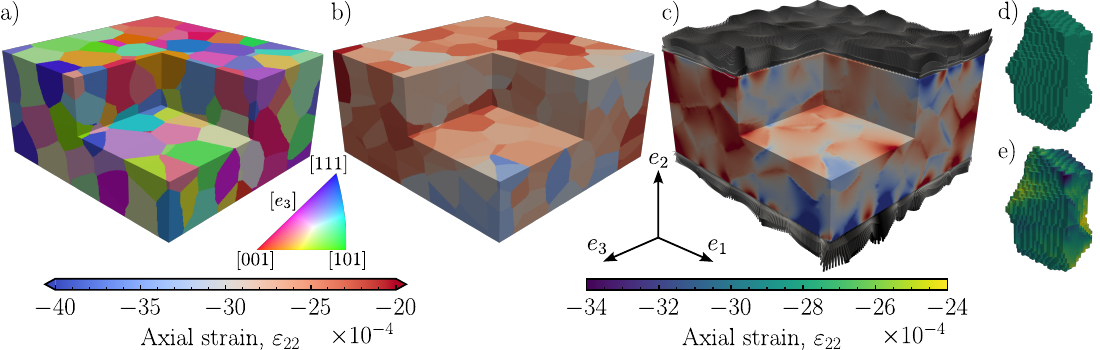}
  \caption{Best-approximate reconstruction for experimental data in AlON.  (a)~Grain structure (inverse pole figure coloring) and (b)~grain-averaged axial strains as measured from ff-HEDM analysis. (c)~Best-approximate reconstruction. (d)~Experimentally measured grain-averaged axial strains and (e)~associated best-approximate intragranular strain field in a grain fully contained within the volume.}
  \label{fig:experimental_full_field_strains}
\end{figure}

  \begin{table}
    \centering
    \caption{Cubic elastic constants for aluminum oxynitride (AlON)~\cite{satapathySingleCrystalElasticProperties2016}.}
    \label{table:elastic_properties}
    \begin{tabular}{l l l}
      \hline
      $c_{11}$ [GPa] & $c_{12}$ [GPa] & $c_{44}$ [GPa] \\
      \hline
      334.8          & 164.4          & 178.6          \\
      \hline
    \end{tabular}
  \end{table}

We apply the method of finding the best-approximate reconstruction to experimental observations collected from an aluminum oxynitride (AlON) parallelepiped under compression.  AlON is a linearly elastic transparent ceramic widely used in blast-proof windows~\cite{mccauleyAlONBriefHistory2009}.  It has a face-centered cubic lattice (space group Fd$\overline 3$m), a strain-free lattice constant of 7.947\,\AA~as determined by powder X-ray diffraction analysis (see Appendix~\ref{appendix:further_experimental_details}), and the elastic constants reported in Table~\ref{table:elastic_properties}.  A polycrystalline sample was subjected to uniaxial compression with a total compressive load of $F = -1700\,\text{N}$, and an approximately $1370 \times 700 \times 1460\, \mu$m region was interrogated using ff-HEDM at the 1-ID beamline at the Argonne National Laboratory Advanced Photon Source.  The grain radii, grain centroids, grain orientations, and grain-averaged elastic strains were obtained from the diffraction patterns using MIDAS~\cite{sharmaFastMethodologyDetermine2012a,sharmaFastMethodologyDetermine2012}.  The grain structure was approximated using a grain-radius-weighted Voronoi tessellation based on the grain centroids\footnote{A near-field measurement that provides accurate grain structure was not made.}.  The grain structure is shown in Figure~\ref{fig:experimental_full_field_strains}(a), and the grain-averaged strains are shown in Figure~\ref{fig:experimental_full_field_strains}(b).  

We find the best-approximate full-field strain reconstruction by solving Equation~\eqref{eq:bas} with a mesh size of $137 \times 70 \times 146$ brick elements and $\lambda=0.5$. The reconstruction is shown in Figure~\ref{fig:experimental_full_field_strains}(c), with further details given in Appendix~\ref{appendix:further_experimental_details}. The best-approximate full-field strain reconstruction elucidates the significantly heterogeneous strain field and stress concentrations within each grain, cf.\ Figures~\ref{fig:experimental_full_field_strains}(d) and~\ref{fig:experimental_full_field_strains}(e), and fluctuations in the loads and strain fields in this experimental case are similar to those seen in the numerical case (cf., Figures~\ref{fig:experimental_full_field_strains}(c) and~\ref{fig:3d_strain_field_comparison}(c)). It is also worth noting that the full-field strain is not known in this case; hence, the best-approximate reconstruction obtained here cannot be directly validated.  However, comparisons of this method to other experimental or reconstruction techniques are possible. This is left for future work.

\section{Quantifying nonuniqueness uncertainty}\label{sec:saint_venants_principle}
\begin{figure}
  \centering
  \includegraphics[width=6.5in]{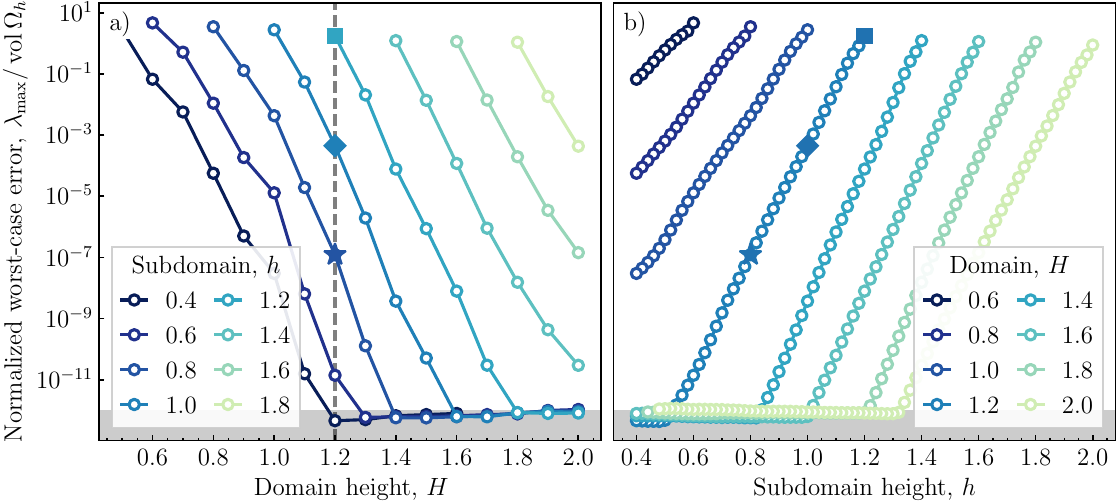}
  \caption{Normalized worst-case error (maximum eigenvalue normalized by volume) in a subdomain of height~$h$ versus the domain height~$H$. Worst-case error as a function of (a)~domain height for fixed subdomain heights, and (b)~subdomain height for fixed domain heights. The shaded regions approximately indicate where numerical breakdown occurs, and the non-circular markers correspond to the fields in Figure \ref{fig:null_traction}(a--c).}
  \label{fig:max_eigenvalues}
\end{figure}

We have shown in Section~\ref{sec:problem_formulation} that the reconstruction of the full-field strains from the grain-averaged values is non-unique due to the presence of an infinite number of kernel fields (traction distribution with net zero force and corresponding strain fields with zero grain-wise average).  So we examine the nature of the kernel fields and the uncertainty they may lead to in the full-field reconstruction.

The examples we studied earlier involved cylindrical domains with traction-free lateral surfaces and tractions applied to the ends; viz., they correspond to Saint-Venant problems.  Further, we recall from Figure~\ref{fig:null_traction} that kernel tractions are highly fluctuating and that kernel strain fields decay rapidly away from the boundaries on which tractions are applied.  We also recall from Figure~\ref{fig:3d_strain_field_comparison} that the strain error was the highest near the surfaces on which the tractions are applied.  This is reminiscent of Saint-Venant's principle \cite{toupinSaintVenantsPrinciple1965,knowlesSaintVenantPrincipleTwodimensional1966,toupinStVenantPrinciple1966}, which states that in cylindrical domains loaded at the ends, the strain becomes uniform away from the ends. Rigorous statements by Toupin~\cite{toupinSaintVenantsPrinciple1965} prove that deviations from uniformity decay exponentially away from the boundaries in homogeneous isotropic elastic cylinders in two and three dimensions.  We are unaware of any rigorous results in heterogeneous polycrystals.

Motivated by the examples in Section~\ref{sec:examples} and Saint-Venant's principle, we look at the magnitude of the kernel strain fields in the center section of a cylindrical domain of height~$H$.  Specifically, we take our cylindrical domain to be $\Omega = \omega \times (-H/2,H/2)$ where $\omega$~is the cross section.  We set $\Omega_h = \omega \times (-h/2,h/2)$ as a truncated cylindrical subdomain of height~$h < H$. We then seek to find the unit kernel field with maximal $L^2$-norm in this subdomain, i.e., the maximum uncertainty due to nonuniqueness:
\begin{equation}\label{eq:continuous_max_error}
\begin{aligned}
\lambda_\text{max} &= \max_\text{unit kernel fields} \ \int_{\Omega_h} \left| \frac{1}{2}(\nabla u + \nabla u^\top) \right|^2 \, \mathrm{d}x, \\
 &=\max_{t \in \mathcal{T}_0 \cap \ker{\mathcal L}, \|t\|_{2} =1}  \ \int_{\Omega_h} \left| \nabla \mathcal{S} t + (\nabla \mathcal{S} t)^\top\right|^2 \, \mathrm{d}x,
\end{aligned}
\end{equation} 
where $\mathcal S$~maps the surface tractions to displacements (see Equation~\eqref{eq:solution_operator}).

In the discretized setting, Equation~\eqref{eq:continuous_max_error} becomes
\begin{equation}\label{eq:discrete_max_error}
\begin{aligned}
\lambda_\text{max} &= \max_{f \in \ker C \cap \ker L, \|f\| = 1} \ f^\top M f, \\
 &=\max_{f} \ \frac{ f^\top \tilde P M \tilde P f}{\|f\|^2} ,
\end{aligned}
\end{equation} 
where the Gramian matrix $M := T^\top K^{-\top} B^{\top} B K^{-1} T \in \mathbb R^{d N_\mathrm{f} \times d N_\mathrm{f}}$ such that $f^\top M f$ is analogous to the integral in Equation~\ref{eq:continuous_max_error}, and $\tilde P$ is any projection onto $\ker C \cap \ker L$.  We conclude that the relative uncertainty due to non-uniqueness is bounded by the largest eigenvalue of~$\tilde P M \tilde P$.

We compute $\lambda_\text{max}$ for various values of $H$~and~$h$ in two dimensions for the parameters and given in Section~\ref{sec:kernel_example} and the polycrystal shown in Figure~\ref{fig:null_traction}, the results of which are shown in Figure~\ref{fig:max_eigenvalues}.  We observe that for any fixed~$h$, the maximum error decays exponentially with increasing~$H$ (Figure~\ref{fig:max_eigenvalues}(a)); and for any fixed~$H$, the maximum error decays exponentially with decreasing~$h$ (Figure~\ref{fig:max_eigenvalues}(b)).  In other words, the worst-case error drops exponentially as we move away from the ends.  This is consistent with the results of Saint-Venant's principle~\cite{toupinSaintVenantsPrinciple1965}.   More significantly, it implies that all solutions of the strain reconstruction will more closely agree in the central section than the top and bottom boundaries, and that the strain reconstruction from grain-wise averages can be meaningful in the central section as long as the data collection and reconstruction are performed on a sufficiently long region.

\section{Conclusion}\label{sec:conclusions}

We have addressed the problem of recovering admissible full-field strain fields from grain-averaged strains and crystal orientations of an elastic polycrystal measured using high-energy X-ray diffraction microscopy~(HEDM).  We have shown that this problem has either zero or an infinite number of solutions.  Specifically, there are an infinite number of kernel fields, boundary tractions with zero net force and zero grain-averaged strains.  We address non-existence by defining a best-approximate reconstruction.  We describe a numerical method and illustrate these results with a series of examples, including an example using experimental data collected on an aluminum oxynitride~(AlON) coupon.  We then show, inspired by Saint-Venant's principle, that the kernel fields decay exponentially away from the ends of the loaded ends, and consequently, the strain reconstruction from grain-wise averages can be meaningful in the central segment for sufficiently tall interrogation volumes.  

In this work, we address grain-averaged strains, but the ideas and results are the same for voxel-averaged strains, relevant to emerging techniques like point-focused HEDM~(pf-HEDM).

This work, like others, considered the problem of recovering full-field strains as a post-processing step.  However, one can envision a more direct approach.  As the crystallographic orientations and elastic strains are obtained from diffraction measurements by solving an inverse problem (i.e., reconstruction), one can extend existing reconstruction methods by adding constraints to the physics of deformation in the form of compatibility, equilibrium, and constitutive relations.  This extension would add to the computational cost, but would provide a more accurate assessment of the actual strain fields.

\section*{Acknowledgements}
We thank Peter Voorhees for numerous discussions during the course of this work.  We gratefully acknowledge the financial support of the US Office of Naval Research (No.~N00014-21-1-2784). We also gratefully acknowledge the support of the US National Science Foundation through Graduate Research Fellowships to CKC and SFG (No.~2139433).

\appendix
\section{Additional details of numerical method}\label{appendix:discrete_matrix_definitions}

\paragraph{Polycrystal}
We generate our polycrystals using Voronoi tessellation based on Euclidian distance starting from $N_\mathrm{g}$~seed points sampled uniformly.  Each resulting grain is assigned a random orientation drawn uniformly from the unit 4-sphere, and the stiffness tensor in the global coordinate system for each grain is assigned by rotating the crystal frame stiffness tensor~$\mathbb C^\mathrm{cryst}$, which is populated with the cubic elastic constants of AlON given in Table~\ref{table:elastic_properties}, where $\mathbb C^\mathrm{cryst}$~is given in reduced Voigt notation as
\begin{equation}
    \mathbb C^{\mathrm{cryst}} = \begin{bmatrix}
        c_{11} & c_{12} & c_{12} & 0 & 0 & 0 \\
        c_{12} & c_{11} & c_{12} & 0 & 0 & 0 \\
        c_{12} & c_{12} & c_{11} & 0 & 0 & 0 \\
        0 & 0 & 0 & c_{44} & 0 & 0 \\
        0 & 0 & 0 & 0 & c_{44} & 0 \\
        0 & 0 & 0 & 0 & 0 & c_{44} \\
    \end{bmatrix}.
\end{equation}
In the case that the anisotropic axis form is used, the stiffness tensor in the crystal frame is given by:
\begin{equation}
    \mathbb C^\mathrm{cryst}_{ijkl} = \frac{Y \nu}{(1 + \nu)(1 - 2\nu)}\delta_{ij}\delta_{kl} + \frac{Y}{2(1+\nu)}(\delta_{ik}\delta_{jl} + \delta_{il}\delta_{jk}) + \alpha \left(\sum_{n=1}^d e^n_i e^n_j e^n_k e^n_l\right),
\end{equation}
where $\delta_{ij}$~is the Kronecker delta, $Y$~is Young's modulus, $\nu$~is Poisson's ratio, $\alpha$~is the cubic anisotropy parameter, and $e^n_i$~is the $n$th basis vector.

\paragraph{Discretization}
We discretize our domain using P1 Lagrange elements for all simulations (affine hexahedral and quadrilateral elements for 3D and 2D, respectively). All finite-element-related operations are conducted using the open-source deal.II finite-element library~\cite{arndtDealIILibrary2023}. A total of $N_\mathrm{s}$~degrees of freedom are prescribed with homogeneous Dirichlet boundary conditions to remove all translational and rotational invariance; this ensures uniqueness of the finite-element system while not overconstraining the problem.
 
\paragraph{Matrices}
The matrices in Section~\ref{sec:discrete_matrices} are explicitly defined as follows.  First, define the extension matrix mapping prescribed nodal forces to the full finite-element forcing vector, $T \in \mathbb R^{d N_\mathrm{n} \times d N_\mathrm{f}}$, as
  \begin{equation}
    T_{ij} = \begin{cases}
      1 & S(j) = i,   \\
      0 & \text{else},
    \end{cases}
  \end{equation}
  where $S \colon \mathbb{N} \to \mathbb{N}$ is an injection from the boundary DOF index to the nodal DOF index. The finite-element stiffness matrix $K \in \mathbb R^{d N_\mathrm{n} \times d N_\mathrm{n}}$ then relates the global nodal force vector~$f_\mathrm{RHS} = Tf$ and displacement vector~$u$ as $Ku = f_\mathrm{RHS}$. For a properly constrained problem, $K$~is full rank and its inverse $K^{-1} \in \mathbb R^{d N_\mathrm{n} \times d N_\mathrm{n}}$ exists. We then define the global symmetric gradient matrix $B \in \mathbb R^{N_\mathrm{s} N_\mathrm{e} \times d N_\mathrm{n}}$, which maps discrete nodal displacements to element strains, as the globally assembled version of the element strain--displacement matrix $B_\mathrm{elem} \in \mathbb R^{N_\mathrm{s} \times d 2^{d}}$ defined using gradients of the local element shape functions $\phi$; e.g., for 2D affine quadrilateral elements, $B_\mathrm{elem}$ is defined as
  \begin{equation}
    B_\mathrm{elem} = \begin{bmatrix}
      \frac{\partial \phi_1}{\partial x_1}   & 0                                      & \frac{\partial \phi_2}{\partial x_1}   & 0                                      & \frac{\partial \phi_3}{\partial x_1} & 0                                      & \frac{\partial \phi_4}{\partial x_1}   & 0                                      \\
      0                                      & \frac{\partial \phi_1}{x_2}            & 0                                      & \frac{\partial \phi_2}{x_2}            & 0                                    & \frac{\partial \phi_3}{x_2}            & 0                                      & \frac{\partial \phi_4}{x_2}            \\
      \frac{1}{2}\frac{\partial \phi_1}{x_2} & \frac{1}{2}\frac{\partial \phi_1}{x_1} & \frac{1}{2}\frac{\partial \phi_2}{x_2} & \frac{1}{2}\frac{\partial \phi_2}{x_1} & \frac{\partial \phi_3}{x_2}          & \frac{1}{2}\frac{\partial \phi_3}{x_1} & \frac{1}{2}\frac{\partial \phi_4}{x_2} & \frac{1}{2}\frac{\partial \phi_4}{x_1}
    \end{bmatrix},
  \end{equation}
  noting that tensorial shear strains are computed here instead of the typical engineering shear strains. The strain in linear quadrilateral elements is not constant, so the weighted-average over all quadrature points is taken. Finally, the grain-averaging matrix $G \in \mathbb R^{N_\mathrm{s} N_\mathrm{g} \times N_\mathrm{s} N_\mathrm{e}}$ is defined as
  \begin{equation}
    G_{ij} = \begin{cases}
      \operatorname{vol} (E_j) / \operatorname{vol} (\Omega_i) & E^\mathrm{c}_j \in \Omega_i, \\
      0                          & \text{else},
    \end{cases}
  \end{equation}
  where $E_j$~and~$E^\mathrm{c}_j$ are the $j$th element and its centroid, respectively, and $\Omega_i$~is the domain of grain~$i$. In practice, $T$,~$B$, and~$G$ are assembled as sparse matrices.

  The matrix~$C \in \mathbb{R}^{N_\mathrm{s} + 1 \times d N_\mathrm{f}}$ is a globally assembled (over the surfaces with applied loading) form of $C_\mathrm{node} \in \mathbb R^{N_\mathrm{s} + 1 \times d}$ which computes the net force and moment due to the applied force~$f_\mathrm{node} \in \mathbb{R}^{d}$ at the node, which, assuming that the total force~$F$ is applied in the $e_2$-direction, is defined in 3D as
  \begin{equation}\label{eq:constraint_matrix}
    C_\mathrm{node} = \begin{bmatrix}
      1    & 0        & 0    \\
      0    & -I(x)    & 0    \\
      0    & 1 - I(x) & 0    \\
      0    & 0        & 1    \\
      0    & -x_3     & x_2  \\
      x_3  & 0        & -x_1 \\
      -x_2 & x_1      & 0
    \end{bmatrix}, \quad c_F = \begin{bmatrix}
      0 \\ F \\ F \\ 0 \\ 0 \\ 0 \\ 0
    \end{bmatrix},
  \end{equation}
  where $I(x) = 1$ if $x \in \partial^\mathrm{t}_t \Omega$ and $I(x)=0$ otherwise. When $C$~is applied to~$f$, the first four rows compute the net forces while the bottom three rows compute the net moments. The uniform top and bottom normal load vector~$f_F$ with total load~$F$ then trivially satisfies $Cf_F = c_F$. To optimize memory usage, the projection matrix~$P$ onto~$\ker C$ is not explicitly assembled; rather, its action is computed as $P = I - QQ^\top$ where the columns of~$Q$ span $\operatorname{coim}C = (\ker C)^\perp$, obtained via the reduced QR factorization: $C^\top = QR$. In Equation~\eqref{eq:discrete_max_error} $L$ is explicitly formed, so $\tilde P$~is computed from the null space basis vectors obtained from the SVD of the stacked matrix $\begin{bmatrix} C & L \end{bmatrix}^\top$.

\paragraph{Solvers}
A conjugate-gradient solver with an algebraic multigrid preconditioner implemented in Trilinos~\cite{thetrilinosprojectteamTrilinosProjectWebsite} is used to solve the elasticity problem.   The LSMR solver~\cite{fongLSMRIterativeAlgorithm2011} is used to solve the least-squares problem (Equation~\eqref{eq:bas}).  This algorithm only requires the action of~$L$ (and its transpose), which, as defined in Equation~\eqref{eq:discrete_L}, requires the application of~$K^{-1}$.  In three dimensions, $L$~is not explicitly formed and the action of $K^{-1} = K^{-\top}$ is computed using the preconditioned conjugate-gradient solver. Further, only the conjugate-gradient solve is parallelized while the action of~$G$,~$B$,~$T$,~$P$, and the iterative least-squares solver updates are all computed by a single primary thread. In two dimensions, $L$~is formed discretely and $K^{-1}$ is directly computed. In \emph{all} computations, units are rescaled to be non-dimensionalized for improved numerical behavior, i.e., $\hat f = f / F$, $\hat{\mathbb{C}} = \mathbb{C} / c_{11}$, and $\hat \Omega = \Omega / \ell$ where $\ell$~is the characteristic length of~$\Omega$. In strain computations and visualizations, the units are scaled back.

\section{Numerical ground-truth dataset}\label{appendix:ground_truth_dataset}

A square cuboidal domain $\Omega_{3H} = (0,H) \times (-H, 2H) \times (0,H)$ is generated with $N_\mathrm{g} = 150$ grains; see Figure \ref{fig:ground_truth_dataset}(a).  The linear elasticity problem with uniform applied loading on the top and bottom surfaces is solved using the finite-element method with a mesh size of $80 \times 240 \times 80$ cubic brick elements; see Figure~\ref{fig:ground_truth_dataset}(b). The microstructure and grain-averaged strains are then extracted from a cubic subdomain $\Omega_H = (0,H) \times (0, H) \times (0,H) \subset \Omega_{3H}$ ($80 \times 80 \times 80$ cubic brick elements) in the center of the larger volume; the internal force on the boundary $\partial \Omega_H \cap \Omega_{3H}$ is also extracted; see Figure \ref{fig:ground_truth_dataset}(c). The extracted microstructure region is used as a synthetic reference dataset against which the best-approximate reconstruction is tested. Physical units are then assigned to the problem: $H = 1\,\text{mm}$, $F = 850\,\text{N}$, and the cubic elastic constants of AlON are used (Table \ref{table:elastic_properties}). 
\begin{figure}[!htbp]
    \centering
    \includegraphics[width=5in]{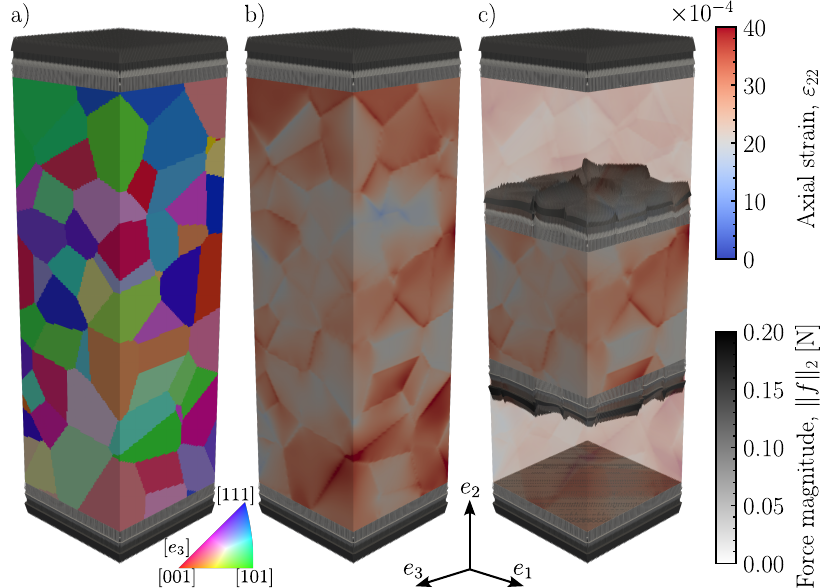}
    \caption{a)~Microstructure generated by Voronoi tessellation (inverse pole figure coloring). b)~Axial strain field of the uniformly loaded microstructure as determined from a linear-elastic finite-element computation. c)~Extracted subdomain axial strain field with the associated non-uniform internal boundary loading.}
    \label{fig:ground_truth_dataset}
  \end{figure}

\section{Further details on experimental comparison}\label{appendix:further_experimental_details}

  \begin{figure}[!htbp]
    \centering
    \includegraphics[width=6.5in]{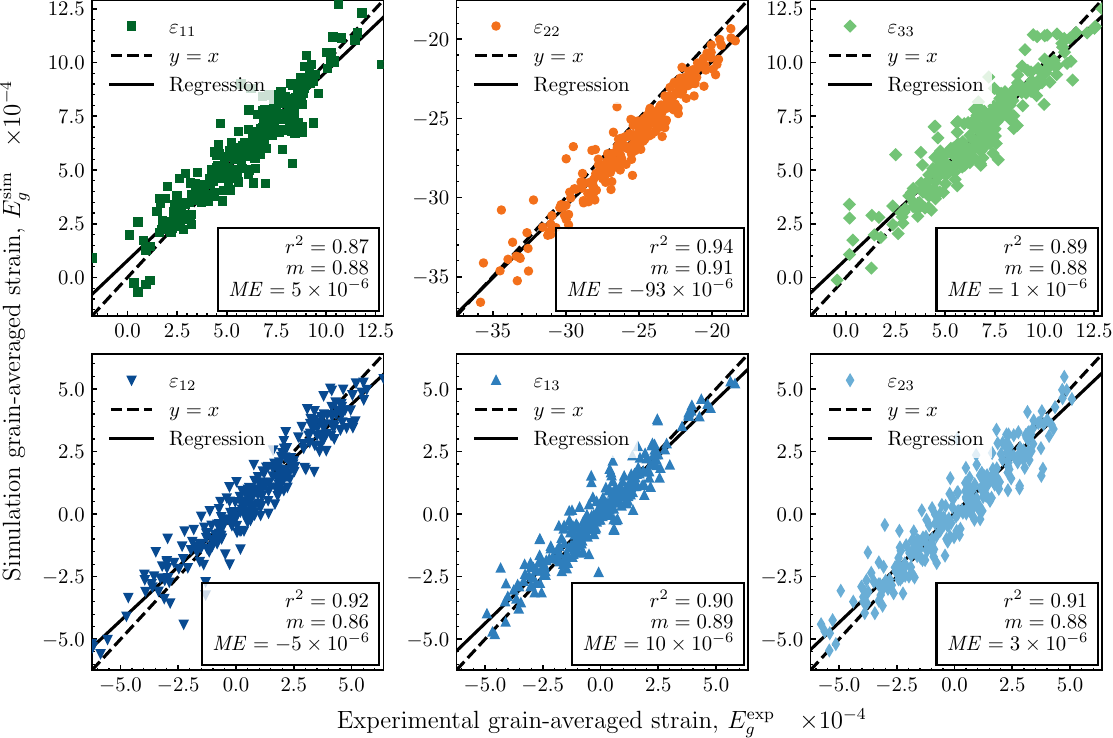}
    \caption{Plot showing the recovered versus corrected experimental grain-averaged strains for all grains within the experimental volume. For each plot, linear regression is performed, and the regression line is plotted along with the coefficient of determination~$r^2$, slope~$m$, and mean error~$\mathit{ME}$. If $Lf = E^\mathrm{exp}$ could be solved exactly, the points would all fall along the line~$y=x$.}
    \label{fig:experimental_regression_plot}
  \end{figure}

 We noticed a hydrostatic offset between the experimentally measured and simulated grain-averaged strains under uniform load; this is a consequence of an incorrect value for the strain-free lattice parameter of AlON ~\cite{reischigThreedimensionalReconstructionIntragranular2020}.  To roughly correct for the offset, we computed the grain-averaged strains from a finite-element simulation with uniform loading applied to the boundaries of the microstructure.  The mean hydrostatic offset is determined to be $\varepsilon_\mathrm{h} = -9.32 \times 10^{-4}$ by comparing these strains to the measured grain-averaged strains. This offset is removed from all experimentally measured normal strains, and these ``corrected'' strains are those with which the inverse problem was solved.  Figure~\ref{fig:experimental_regression_plot} shows the recovered versus corrected experimental grain-averaged strains along with the linear regression between them.  If the hydrostatic offset were not removed, the data points for the normal strain components would all fall below the lines~$y=x$ in Figure~\ref{fig:experimental_regression_plot}.

Note the mean slope of $m_\mathrm{avg} = 0.88$ in the regression lines seen in Figure \ref{fig:experimental_regression_plot}; this indicates that the experimentally measured strains have a larger scaling than the recovered strains. The tessellated microstructure likely has smoother grain boundaries than in the actual experiment and small grains may not have been found by the far-field indexing procedure; hence, the experimental microstructure shown in Figure~\ref{fig:experimental_full_field_strains}(a) likely has less heterogeneity than reality.

\end{document}